\newtheorem{theorem}{Theorem}[section]
\newtheorem{lemma}[theorem]{Lemma}
\theoremstyle{definition}
\newcommand{\F}{{\cal F}}
\newcommand{\Oh}{\mathrm{O}}
\newcommand{\ceil}[1]{\left\lceil{#1}\right\rceil}
\DeclarePairedDelimiter{\floor}{\lfloor}{\rfloor}
\newcommand{\dist}{\operatorname{dist}}
\newcommand{\distroot}{\operatorname{distroot}}
\newcommand{\nca}{\ensuremath{\mathrm{nca}}}
\newcommand{\set}[1]{\left \{ #1 \right \}}
\newcommand{\eps}{\varepsilon}
\newcommand{\scheme}{\sigma}
\title{Distance labeling schemes for trees}
\author{Stephen Alstrup   \thanks{Department of Computer Science,
University of Copenhagen. E-mail:
{\tt s.alstrup@di.ku.dk}.}
\and 
Inge Li G{\o}rtz \thanks{DTU Compute,
Technical University of Denmark. E-mail:
{\tt inge@dtu.dk}} 
\and
Esben Bistrup Halvorsen \thanks{ Department of Computer Science,
University of Copenhagen. E-mail: {\tt esben@bistruphalvorsen.dk}}. 
\and Ely Porat \thanks{Department of Computer Science,
	Bar-Ilan University.
{\tt porately@cs.biu.ac.il}}}
\begin{document}

\date{}
\maketitle

\pagestyle{empty}
\thispagestyle{empty}

\begin{abstract}
\noindent%
We consider distance labeling schemes for trees: given a tree with $n$ nodes, label the nodes with binary strings such that, given the labels of any two nodes, one can determine, by looking only at the labels, the distance in the tree between the two nodes.

A lower bound by Gavoille et. al. (J. Alg. 2004)  and an upper bound by Peleg (J. Graph Theory 2000) establish that labels must use $\Theta(\log^2 n)$ bits\footnote{Throughout this paper we use $\log$ for $\log_2$.}. Gavoille et. al. (ESA 2001) show that for very small approximate stretch, labels use $\Theta(\log n \log \log n)$ bits. Several other papers investigate various variants such as, for example, small distances in trees (Alstrup et. al., SODA'03).  

We improve the known upper and lower bounds of exact distance labeling by showing that $\frac{1}{4} \log^2 n$ bits are needed and that $\frac{1}{2} \log^2 n$ bits are sufficient. We also give ($1+\eps$)-stretch labeling schemes using $\Theta(\log n)$ bits for constant $\eps>0$. ($1+\eps$)-stretch labeling schemes with polylogarithmic label size have previously been established for doubling dimension graphs by Talwar (STOC 2004).

In addition, we present matching upper and lower bounds for distance labeling for caterpillars, showing that labels must have size $2\log n - \Theta(\log\log n)$. For simple paths with $k$ nodes and edge weights in $[1,n]$, we show that labels must have size $\frac{k-1}{k}\log n+\Theta(\log k)$.
\end{abstract}

\vspace*{3cm}

\newpage
\pagestyle{plain}
\setcounter{page}{1}

% !TeX root = approxdist.tex

\section{Introduction}

A \emph{distance labeling scheme} for a given family of graphs assigns \emph{labels} to the nodes of each graph in the family such that, given the labels of two nodes in the graph and no other information, it is possible to determine the shortest distance between the two nodes. The labels are assumed to be composed of bits, and the goal is to make the worst-case label size as small as possible.  
 Labeling schemes are also called \emph{implicit representation of graphs}~\cite{spinrad2003efficient,wiki:implicit}. 
The problem of finding implicit representations with small labels for specific families of graphs was introduced  in the 1960s~\cite{Breuer66,BF67}, and efficient labeling schemes were introduced in~\cite{KNR92,muller}.
Distance labeling for general graphs has been considered since the 1970/80s~\cite{grahampollak,winkler}, and later for various restricted classes of graphs and/or approximate distances, often tightly related to distance oracle and routing problems, see e.g.~\cite{distalstrup}.
This paper focuses on distance labels for the well studied case of trees. 

\paragraph{Exact distances.}
In~\cite{peleg} Peleg presented an $\Oh(\log^2 n)$ bits distance labeling scheme for general unweighted trees. In~\cite{Gavoille200485} Gavoille et al.\ proved that distance labels for unweighted binary trees require $\frac{1}{8}\log^2 n -\Oh(\log n)$ bits and presented a scheme with $1/(\log 3-1) \log n \approx 1.7 \log n$ bits. This paper presents a  scheme of size $\frac{1}{2} \log^2 n +\Oh(\log n)$ and further reduces the gap by showing that $\frac{1}{4} \log^2 n - O(\log n)$ bits are needed. Our upper bound is a somewhat straightforward  application of a labeling scheme for nearest common ancestors~\cite{AGKR04,alstrupnca2014}. 

\paragraph{Approximate distances.}
Let $\dist_T(x,y)$ denote the shortest distance between nodes $x,y$ in a tree $T$. An $r$-additive approximation scheme returns a value $\dist'_T(x,y)$, where $\dist_T(x,y)\leq \dist'_T(x,y) \leq \dist_T(x,y)+r$. An $s$-stretched approximation scheme returns a value $\dist'_T(x,y)$, where $\dist_T(x,y)\leq \dist'_T(x,y) \leq \dist_T(x,y) \cdot s$. For trees of height $h$ Gavoille et al.~\cite[theorem 4]{GKKPP01} gave  a 1-additive $\Oh(\log n \log h)$ bit labeling scheme. However, using an extra bit in the label for the node depth modulo 2, it is easy to see that any 1-additive scheme can be made exact. Gavoille et al.~\cite{GKKPP01} also gave upper and lower bounds of $\Theta(\log \log n \log n)$ bits for $(1+1/\log n)$-stretched distance. This paper presents a scheme of size $\Theta(\log n)$ for $(1+\eps)$-stretch for constant $\eps>0$. Labeling schemes for $(1+\eps)$-stretch with polylogarithmic size label have previously been given for graphs of doubling dimension~\cite{Talwar04} and planar graphs~\cite{Thorup2004distance}. 

\paragraph{Distances in caterpillars and paths.}
Labeling schemes for caterpillars have been studied for various queries, e.g., adjacency~\cite{bonichon2006short}. Here we present upper and lower bounds showing that distance labeling caterpillars requires $2\log n -\Theta(\log \log n)$ bits. The upper bound is constructed by reduction to the case of \emph{weighted paths} with $k>1$ nodes and positive integer edge weights in $[1,n]$, for which we give upper and lower bounds showing that labels must have size $\frac{k-1}{k} \log n+\Theta(\log k)$.

\begin{table}[h!] 
	\centering
	\makebox[0pt][c]{
		\begin{tabular}{|c|c|c|}
			\hline
			\bf Problem & \bf Lower bound  & \bf Upper bound \\
			\hline
			Exact, general trees  & $\frac{1}{4}\log^2 n$  & $\frac{1}{2}\log^2 n$ \\
			\hline
			$(1+\eps)$-stretch, general trees  & \multicolumn{2}{c|}{$\Theta(\log n)$} \\
			\hline
			Caterpillars & \multicolumn{2}{c|}{$2\log n-\Theta(\log \log n)$} \\
			\hline
	         Weighted paths, $k$ nodes, weights in $[1,n]$ & \multicolumn{2}{c|}{$\frac{k-1}{k}\log n+\Theta(\log k)$} \\
			\hline
		\end{tabular}
	}
	\caption{Results presented in this paper. $\eps >0$ is a constant.
	}
	\label{tab:overview}
\end{table}

\subsection{Related work}

\paragraph{Distances in trees with small height.}
It is known that, for unweighted trees with bounded height $h$,  labels must have size $\Theta(\log n \log h)$. The upper bound follows from~\cite[Theorem 2]{GKKPP01} and the lower bound from~\cite[Section 3]{Gavoille200485}\footnote{We thank Gavoille for pointing this out.}. In~\cite{KaoLW07} distance labeling for various restricted class of trees, including trees with bounded height, is considered, and in~\cite{TangYZ09}  another distance labeling scheme for unweighted trees using $\Oh(\log n \log h)$ bits is given.

\paragraph{Small distances in trees.}
Distances in a tree between nodes at distance at most $k$ can be computed with labels of size $\log n + \Oh(k \sqrt{\log n} )$~\cite{KM01}. In~\cite{alstrupbillerauhe} it is shown that size $\log n +\Theta(\log \log n)$ are needed for labeling schemes supporting both parent and sibling queries. More generally, \cite{alstrupbillerauhe} shows that, using labels of size $\log n+\Oh(\log \log n)$, the distance between two nodes can be determined if it is at most $k$ for some constant $k$, which is optimal for $k>1$. In~\cite{GavoilleL07IS,GavoilleL07PO} further improvements are given for small distances in trees. For $k=1$, corresponding to adjacency testing, there is a sequence of papers that improve the second order term, recently ending with~\cite{adjacencytrees2015} which establishes that $\log n+\Theta(1)$ bits are sufficient.

\paragraph{Various other cases for trees.}
Distance labeling schemes for various other cases have been considered, e.g.,  for weighted trees~\cite{GKKPP01,Gavoille200485,peleg}, dynamic trees~\cite{dynamicKormanP07}, and a labeling scheme variation with extra free lookup~\cite{KormanK13,KormanK07}.

\paragraph{Exact and approximate distances in graphs.}
Distance labeling schemes for general graphs~\cite{distalstrup,Gavoille200485,grahampollak,spinrad2003efficient,weinmannpeleg,winkler} and various restricted graphs exist, e.g., for bounded tree-width, planar and bounded degree~\cite{Gavoille200485}, distance-hereditary~\cite{GP03b}, bounded clique-width~\cite{CV03}, some non-positively curved plane~\cite{CDV06}, interval~\cite{GP08} and permutation graphs~\cite{BG09}. Approximate distance labeling schemes, both additive and stretched,  are also well studied; see
e.g.,~\cite{CDEHV08,ElkinFN15,GKKPP01,GL05,Gavoille200485,GuptaKL03,gupta2005traveling,KL06,peleg,ThZw05}. An overview of distance labeling schemes can be found in~\cite{distalstrup}.

\subsection{Second order terms are important}\label{secondmatter}
Chung's solution in~\cite{Chung90} gives labels of size $\log n+O(\log \log n)$ for adjacency
labeling in trees, which was improved to $\log n + O(\log^* n)$  in FOCS'02~\cite{alstruprauhe} and in~\cite{bonichon2006short,Chung90,Fraigniaud2009randomized,fraigniaudkorman2,KMS02} to $\log n + \Theta(1)$ for various special cases. Finally it was improved to $\log n + \Theta(1)$ for general trees in FOCS'15~\cite{adjacencytrees2015}. 

A recent STOC'15 paper~\cite{AlstrupKTZ14} improves label size for adjacency in general graphs from $n/2+O(\log n)$~\cite{KNR92,moon1965minimal} to $n/2+ O(1)$ almost matching an $(n-1)/2$ lower bound~\cite{KNR92,moon1965minimal}.

Likewise, the second order term for ancestor relationship is improved in a sequence of STOC/SODA papers~\cite{AKM01,AlstrupBR03,AR02,fraigniaudkorman2,fraigniaudkorman} (and~\cite{abiteboul,KMS02})  to $\Theta(\log \log n)$, giving labels of size $\log n+\Theta(\log \log n)$.

Somewhat related, \emph{succinct data structures} (see, e.g.,~\cite{DPT10,FarzanM13,FarzanM14,MunroRRR12,patrascu08succinct})
focus on the space used in addition to the information theoretic lower bound, which is often a lower order term with respect to the overall space used.

\subsection{Labeling schemes in various settings and applications}
By using labeling schemes, it is possible to avoid costly access to large global tables, computing instead locally and distributed. Such properties are used, e.g., in XML search engines~\cite{AKM01}, network routing and distributed algorithms~\cite{Cowen01,EilamGP03,throupzwick,ThZw05}, dynamic and parallel settings ~\cite{CohenKaplan2010,dynamicKormanP07}, graph representations~\cite{KNR92}, and other applications~\cite{siamcompKatzKKP04,Korman2010,peleg2,peleg,SK85}.  Various computability requirements are sometimes imposed on  labeling schemes~\cite{AKM01,KNR92,siamcompKatzKKP04}. This paper assumes the RAM model.

% !TeX root = approxdist.tex

\section{Preliminaries}

\paragraph{Trees.}
Given nodes $u,v$ in a rooted tree $T$, $u$ is an \emph{ancestor} of $v$ and $v$ is a \emph{descendant} of $u$, if $u$ is on the
unique path from $v$ to the root. For a node $u$ of $T$, denote by $T_u$ be the
subtree of $T$ consisting of all the descendants of $u$ (including
itself). The \emph{depth} of $u$ is the number of edges on the unique simple path from $u$ to the root of $T$. The nearest common ancestor (NCA) of two nodes is the unique common ancestor with largest depth. Let $T[u,v]$ denote the nodes on the simple path from $u$ to $v$ in $T$.
The variants $T(u,v]$ and $T[u,v)$ denote the same path without the first and last node, respectively.
 The distance between $u$ and $v$ is the number $\dist(u,v)=|T(u,v]|$. We set $\distroot(v)=\dist(v,r)$, where $r$ is the root of $T$. 
A \emph{caterpillar} is a tree whose non-leaf nodes form a path, called the \emph{spine}.

\paragraph{Heavy-light decomposition.} (From~\cite{sleatortarjan}.)
Let $T$ be a rooted tree. The nodes of $T$ are classified as either \emph{heavy} or \emph{light} as follows. The root $r$ of $T$ is light. For each non-leaf node $v$, pick one child  $w$ where $|T_w|$ is maximal among the children of $v$ and classify it as heavy; classify the other children of $v$ as light. 
The \emph{apex} of a node $v$ is the nearest light ancestor of $v$.
By removing the edges between light nodes and their parents, $T$ is divided into a collection of \emph{heavy paths}. Any given node $v$ has at most $\log n$ light ancestors (see~\cite{sleatortarjan}), so the path from the root to $v$ goes through at most $\log n$ heavy paths.

\paragraph{Bit strings.}
A bit string $s$ is a member of the set $\{0,1\}^*$. We denote the length of a
bit string $s$ by $|s|$, the $i$th bit of $s$ by $s_i$, and the concatenation of
two bit strings $s,s'$ by $s\circ s'$. We say that $s_1$ is the most significant
bit of $s$ and $s_{|s|}$ is the least significant bit.

\paragraph{Labeling schemes.}
An \emph{distance labeling scheme} for trees of size $n$ consists of an
\emph{encoder} $e$ and a \emph{decoder} $d$. Given a tree $T$, the
encoder computes a mapping $e_T : V(T)\to \set{0,1}^*$ assigning a \emph{label} to
each node $u\in V(T)$. The decoder is a mapping $d:\set{0,1}^*\times \set{0,1}^*\to \mathbb{Z}^+$, where $\mathbb{Z}^+$ denotes the positive integers, such that, given any tree $T$ and any
pair of nodes $u,v\in V(T)$, $d(e(u),e(v)) = \dist(u,v)$. Note that the decoder does not know $T$. The \emph{size} of a labeling scheme is defined as the maximum label size
$|e_T(u)|$ over all trees $T$ and all nodes $u\in V(T)$. If, for all trees $T$, the mapping $e_T$ is injective we say that the
labeling scheme assigns \emph{unique} labels.

% !TeX root = approxdist.tex

\section{Distances on weighted paths}
In this section we study the case of  paths with $k$ nodes and integral edge weights in $[1,n]$. The solution to this problem will later be used to establish the upper bound for caterpillars.

\subsection{Upper Bound}

\begin{theorem} \label{theo:distanceweightedpathi}
	There exist a distance labeling scheme for paths with $k$ nodes and positive integral edge weights in $[1,n]$ with labels of size
  $\frac{k-1}{k}\log n +O(\log k)$.
\end{theorem}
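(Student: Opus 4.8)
The plan is to encode distances through the Chinese Remainder Theorem, exploiting the fact that on a path distances add up along the path, so residues of distances modulo a fixed integer add up as well. Number the path $v_1,\dots,v_k$ and fix pairwise coprime moduli $q_1,\dots,q_k$ whose product exceeds the largest possible distance $(k-1)n$ (the path has $k-1$ edges, each of weight at most $n$); think of each $q_j$ as having magnitude roughly $((k-1)n)^{1/k}$. I would let the label of $v_i$ consist of (a) its index $i$, costing $\lceil\log k\rceil$ bits, and (b) for every $j\neq i$, the residue $\rho_{i,j}:=\dist(v_i,v_j)\bmod q_j$, where the $k-1$ residues $\rho_{i,j}$ (listed in the fixed order of the indices $j$) are packed, mixed-radix, into a single integer in $[0,\prod_{j\neq i}q_j)$, costing $\lceil\log\prod_{j\neq i}q_j\rceil$ bits.

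For decoding, given the labels of $v_i$ and $v_{i'}$ (with $i,i'$ read off from them), I would recover $\dist(v_i,v_{i'})\bmod q_j$ for every coordinate $j$ and then invert by CRT. If $j=i'$ then $v_i$'s label already stores $\rho_{i,i'}=\dist(v_i,v_{i'})\bmod q_{i'}$, and symmetrically if $j=i$. Otherwise both labels carry residues of $\dist(v_i,v_j)$ and $\dist(v_{i'},v_j)$ modulo $q_j$, and since on the path exactly one of the three distinct nodes $v_i,v_{i'},v_j$ lies between the other two — and the indices tell us which — path-additivity of distance gives $\dist(v_i,v_{i'})=\pm\dist(v_i,v_j)\pm\dist(v_{i'},v_j)$ with signs depending only on $i,i',j$; hence $\dist(v_i,v_{i'})\bmod q_j=(\pm\rho_{i,j}\pm\rho_{i',j})\bmod q_j$. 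With all $k$ residues in hand, CRT returns $\dist(v_i,v_{i'})\bmod\prod_j q_j$, which equals $\dist(v_i,v_{i'})$ because that distance is at most $(k-1)n<\prod_j q_j$.

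For the size bound, choosing the $q_j$ of roughly equal magnitude $((k-1)n)^{1/k}$ gives $\sum_j\log q_j=\log\prod_j q_j=\log n+\Oh(\log k)$ and $\log q_i\ge\tfrac1k\log n$, so each label uses $\lceil\log\prod_{j\neq i}q_j\rceil+\lceil\log k\rceil=\log\prod_j q_j-\log q_i+\Oh(\log k)\le\tfrac{k-1}{k}\log n+\Oh(\log k)$ bits. The step I expect to be the actual work is producing such moduli: I would take $q_1,\dots,q_k$ to be distinct primes in a short interval $[M,M(1+1/k)]$ with $M$ slightly above $((k-1)n)^{1/k}$, so that $(k-1)n<\prod_j q_j=\Oh((k-1)n)$; a prime-counting estimate shows such an interval holds at least $k$ primes provided $n$ is large enough in terms of $k$ (roughly $n\gtrsim k^{\Theta(k)}$). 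For the remaining small-$n$ range one has $\tfrac1k\log n=\Oh(\log k)$, hence $\log n=\tfrac{k-1}{k}\log n+\Oh(\log k)$, and it suffices to store $(i,\dist(v_1,v_i))$ verbatim, which has $\log n+\Oh(\log k)$ bits and trivially supports distance queries. Combining the two regimes yields the claimed bound.
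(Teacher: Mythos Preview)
Your CRT construction is correct and genuinely different from the paper's argument. The paper stores a single shifted prefix sum: it writes $d_i=\distroot(v_i)$, picks one global offset $x<n$, and chooses $x$ so that the $i$th block of $\ell=\lfloor L/k\rfloor$ bits of $d_i+x$ is all zeros (by greedily fixing the blocks of $x$ from low to high). Each label is then $d_i+x$ with its $i$th block deleted plus the index $i$; decoding is just $|(d_i+x)-(d_j+x)|$. This is entirely elementary---no number theory, no case split on the size of $n$.

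Your scheme instead stores, for node $v_i$, the residues $\dist(v_i,v_j)\bmod q_j$ for all $j\neq i$, and reconstructs $\dist(v_i,v_{i'})\bmod q_j$ from path-additivity (using the known order of $i,i',j$) before inverting by CRT. The analysis is sound; the only real work, as you note, is producing $k$ pairwise coprime $q_j$ in an interval $[M,M(1+1/k)]$ with $M\approx((k-1)n)^{1/k}$, which you get from a prime-counting estimate when $n\gtrsim k^{\Theta(k)}$ and sidestep via the trivial $(\log n+\Oh(\log k))$-bit scheme otherwise. That two-regime split is fine for the stated bound, but it does make the argument less self-contained than the paper's: you rely on an effective short-interval prime estimate, whereas the paper needs nothing beyond binary arithmetic. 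On the other hand, your approach makes the ``save one $k$th of the bits'' phenomenon transparent---each node simply omits its own CRT coordinate---which is a nice conceptual explanation for the $\frac{k-1}{k}\log n$ term.
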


\begin{proof}
We begin by considering the family of paths with $k$ nodes, integral edge weights and diameter $<n$. We shall prove that there exists a distance labeling scheme for this family with  labels of size $\frac{k-1}{k}\log n +\log k+\Oh(\log\log k)$.

	So consider such a path, and root it in one of its end nodes, denoted  $v_0$. Denote the nodes on the path $v_0,  \dots, v_{k-1}$ in order. Let $d_i=\distroot(v_i)$ and note that, by assumption,  $d_i<n$ for all $i$. We will let the label for $v_i$ store the number $d_i+x$ for some $x<n$ that allows us to represent $d_i+x$ compactly. Since we use the same $x$ for all nodes, we can easily compute the distance between any pair of nodes $v_i,v_j$ as $|(d_i+x)-(d_j+x)|$.

Since we choose $x<n$, the largest number stored in a label will be $d_k+x<2n$, which can be represented with \emph{exactly}  $L=\ceil{\log (2n)}$ bits.
	Divide those $L$ bits to $k+1$ segments, whereof $k$ have $\ell=\floor{L/k}$ bits and the last segment contains the remaining bits. The first segment, segment $0$, will contain the $\ell$ least significant bits, segment $1$ the following $\ell$ bits and so on.
We will choose $x$ such that the representation of $d_i+x$ has $0$s in all the bits in the $i$'th segment. If we manage to do so, we will be able to encode each $d_i+x$ with $L -\ell +\ceil{\log k}$ bits. Indeed,  we can use exactly $\ceil{\log k}$ bits to represent $i$, and the next $L-\ell$ bits to represent $d_i+x$ where we skip the $i$'th segment. Prefixing with a string in the form $0^{\ceil{\log\ceil{\log k}}}1$, we get a string from which we can determine the number of bits needed to write $\ceil{\log k}$ and therefrom the numbers $i$ and $d_i+x$. We use this string as the label for $v_i$. The label length is $L-\ell+\ceil{\log k}+\ceil{\log\ceil{\log k}}+1 = \frac{k-1}{k}\log n+\log k+O(\log\log k)$.
	
	It remains to show that there exist a number $x< n$ as described. In the following we shall, as in the above, represent numbers $<2n$ with $L$ bits that are divided into $k+1$ segments whereof the first $k$ have size $\ell$. For $i<k$ and $y<2n$, let $a(i,y)$ be a function which returns a number $z$ with the following properties:
	\begin{itemize}
		\item[(i)] In $z$, all bits outside segment $i$ are $0$. 
		\item[(ii)] $z+y$  has only $0$s in segment $i$.
	\end{itemize}
This function is constructed as follows. If $y$ only has $0$s in segment $i$, let $a(i,y)=0$. Otherwise take the representation of $y$, zero out all bits outside segment $i$, reverse the bits in segment $i$ and add $v$ to the resulting number, where $v$ has a 1 in the least significant bit of segment $i$ and 0s in all other positions. 

Note that from (i) it follows that adding $z$ to any number will not change bits in less significant positions than segment $i$.
We can now scan through the nodes $v_0,\dots v_{k-1}$, increasing $x$ by adding bits to $x$ in more and more significant positions (in non-overlapping segments), as follows: 

	\begin{itemize}
		\item 
		Set $x=0$.
		\item
		For $i=1\dots ,k-1$, set  $x=x+a(i,x+d_i)$.
	\end{itemize}
	
	After iteration $i$ we have that $x+d_i$ in segment $i$ only has 0s, and in the following iterations, $1$s are only added to $x$ in more significant bit positions, meaning that $d_i+x$ continues to have only $0$s in segment $i$. 
	Since the segments are non-overlapping, we end up with $x<n$.

For the more general family of paths with $k$ nodes and edge weights in $[1,n]$, we simply note that the diameter of any path in this family is at most $kn$. Using the above result thus immediately gives us a labeling scheme with labels of  size $\frac{k-1}{k}\log n + O(\log k)$.
\end{proof}

% !TeX root = approxdist.tex

\subsection{Lower bound}

\begin{theorem} \label{theo:distanceweightedpath}
Labeling scheme for distances on weighted paths with $k$ nodes and edge weights $[1,n]$  require $\frac{k-1}{k}\log n +\Omega(\log k)$ bits.
\end{theorem}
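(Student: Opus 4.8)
The plan is an information-theoretic counting argument matching the upper bound of Theorem~\ref{theo:distanceweightedpathi}. Let $(e,d)$ be a distance labeling scheme of size $s$ for the family $\mathcal P$ of (unrooted) paths with $k$ nodes and integer edge weights in $\{1,\dots,n\}$, and let $N$ be the number of binary strings of length at most $s$, so $N<2^{s+1}$. Assigning an integer weight in $\{1,\dots,n\}$ to each of the $k-1$ edges gives $n^{k-1}$ weighted paths, and since reversing a weight sequence yields the same edge-weighted graph, these represent at least $n^{k-1}/2$ pairwise non-isomorphic members of $\mathcal P$.

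For $P\in\mathcal P$ let $S(P)=\{e_P(v):v\in V(P)\}$ be the set of labels $P$ uses. First I would note $|S(P)|=k$: if distinct nodes $u\ne v$ of $P$ shared a label $\ell$, then $d(\ell,\ell)$ would have to be both $\dist(u,u)=0$ and $\dist(u,v)>0$. The crux is that $P\mapsto S(P)$ is injective. Suppose $S(P)=S(P')=\{\ell_1,\dots,\ell_k\}$. The matrix $M=(d(\ell_i,\ell_j))_{i,j}$ depends only on the fixed decoder $d$, hence is identical whether computed inside $P$ or inside $P'$; but $M$ is the distance matrix of a positively weighted path, and such a matrix determines the path. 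Indeed, the unique entry-maximising pair is the pair of endpoints (on a path with positive weights $\dist$ is maximised only between the two ends), the remaining nodes are linearly ordered by their distances to one endpoint (these distances are strictly increasing), and the edge weights are the consecutive differences of those distances. Hence $P$ and $P'$ are isomorphic as edge-weighted paths, i.e.\ $P=P'$.

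It follows that $\binom{N}{k}\ge|\mathcal P|\ge n^{k-1}/2$. Using $\binom{N}{k}\le N^k/k!$ together with $k!\ge(k/e)^k$ gives $N^k\ge (k/e)^k\,n^{k-1}/2$, and taking logarithms (with $\log N<s+1$) yields $k(s+1)\ge k\log(k/e)+(k-1)\log n-1$, hence $s\ge\frac{k-1}{k}\log n+\log k-\log e-1-\frac1k$. In particular $s\ge\frac{k-1}{k}\log n+\Omega(\log k)$ --- in fact with second-order term $\log k-O(1)$, matching Theorem~\ref{theo:distanceweightedpathi}.

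The only real content is the injectivity of $P\mapsto S(P)$: the point that a mere \emph{set} of $k$ labels, fed pairwise to the decoder, already reconstructs the full distance matrix and hence the whole weighted path. The remaining ingredients --- counting weight assignments, $\binom{N}{k}\le N^k/k!$, and Stirling --- are routine, and the factor-$2$ loss from reflection symmetry only affects the additive constant.
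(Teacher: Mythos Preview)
Your argument is correct and is essentially the same counting argument as the paper's: both count at least $n^{k-1}/2$ non-isomorphic weighted paths, observe that each is determined by its \emph{set} of $k$ labels, and bound $\binom{N}{k}$ via $(Ne/k)^k$. The only difference is that you spell out the injectivity of $P\mapsto S(P)$ (reconstructing the path from its distance matrix), whereas the paper simply asserts ``we can uniquely represent each of the paths in $\F$ with the collection of their labels''; your added justification is correct and makes the argument self-contained.
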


\begin{proof}
Let $\F$ denote the family of paths with $k$ nodes and integral edge weights in $[1,n]$.  We can construct all the members of $\F$ by selecting $(k-1)$ different edge weights in the range $[1,n]$, skipping the paths which have already been constructed by the reverse sequence of edge weights. With this construction we will at most skip half of the paths, and hence $|\F|\geq \frac{1}{2}n^{k-1}$.  Let the worst-case label size of an optimal distance labeling scheme for such paths have length $L$. The number of different labels with length at most $L$ is $N=2^{L+1}-1$. We can uniquely represent each of the paths in $\F$ with the collection of their labels, and hence $|\F| \leq {N \choose k}$.  Thus, we have found that $\frac{1}{2}n^{k-1} \leq {N \choose k}$.  Since ${N\choose k}\leq (Ne/k)^k$, it follows that $\frac{k-1}{k} \log n \leq \log N -\log k + O(1)$ and hence that $L\geq \frac{k-1}{k}\log n + \log k - O(1) $ as desired.

\end{proof}
Combining \Cref{theo:distanceweightedpath} with \Cref{theo:distanceweightedpathi} we see that distance labels for paths of $k$ nodes with integral weights in $[1,n]$ must have length $\frac{k-1}{k}\log n +\Theta(\log k)$.

% !TeX root = approxdist.tex

\section{Distances in caterpillars}

\subsection{Upper bound}
\begin{theorem}
	There exist a distance labeling scheme for caterpillars with worst case label size $2\log n -  \log\log n + O(\log\log\log n)$.  
\end{theorem}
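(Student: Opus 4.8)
The plan is to root the caterpillar at one of the two spine endpoints $s_0,\dots,s_{m-1}$ and to arrange that, from the two labels, the decoder can recover for each queried node $u$: (a) whether $u$ is a spine node or a leaf, (b) its spine position $i_u$ (equivalently its depth $d(u)=\distroot(u)$, which is $i_u$ for a spine node and $i_u+1$ for a leaf), and (c), when $u$ is a leaf, enough to tell $u$ apart from the other leaves hanging off $s_{i_u}$. A one‑line computation shows that for a caterpillar $\dist(u,v)=d(u)+d(v)-2\min(i_u,i_v)$, with the single exception that two distinct co‑located leaves are at distance $2$ and a node is at distance $0$ from itself; so (a)--(c) suffice for the decoder. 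Storing $d(u)$ and a leaf‑tiebreaker naively, with a self‑delimiting split, already yields a scheme of size $2\log n+O(\log\log n)$ (and for a caterpillar with a single heavy end one can even root at that end and get $\log n+O(\log\log n)$); the whole point of the theorem is to shave a $\log\log n$ term uniformly, matching the lower bound, and this is where the reduction to weighted paths enters.

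First I would reduce the leaf part to the weighted‑path scheme of \Cref{theo:distanceweightedpathi}. Order the leaves by non‑decreasing depth (breaking ties arbitrarily) as $\ell^{(1)},\dots,\ell^{(L)}$; the depth sequence is then non‑decreasing with total increase $<m\le n$, so setting the weight of $\{\ell^{(t)},\ell^{(t+1)}\}$ to (depth difference)$+1$ produces a weighted path whose root‑distances encode $d(\ell^{(t)})$ up to the known additive term $t$. Feeding this path into the scheme of \Cref{theo:distanceweightedpathi} is exactly what buys the saving: its "zero out one length‑$\ell=\lfloor L'/k\rfloor$ segment of the stored number and name that segment by a $\lceil\log k\rceil$‑bit index" trick stores an $L'$‑bit quantity in $L'-\ell+\lceil\log k\rceil$ bits, and — crucially — when the number of path nodes $k$ is a fixed function of $n$ the decoder needs no self‑delimiting prefix for the index. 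Choosing the granularity so that $\ell\approx\log\log n$ turns the $\approx 2\log n$ depth cost into $\approx 2\log n-\log\log n$, the leftover $O(\log\log\log n)$ absorbing the remaining ceilings, the $+t$ correction, the type bit, and the per‑spine‑node labels (which are just $\log n+O(1)$ bits and never dominate).

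The step I expect to be the main obstacle is reconciling the co‑located leaves with this compression: a spine node may carry up to $\Theta(n)$ leaves, all of the \emph{same} depth, yet they need pairwise distinct labels, so the tiebreaker cannot be folded into the depth "for free" and a clumsy split reintroduces the $\log\log n$ we are trying to remove. The fix I would pursue is to let the tiebreaker play the role of (or ride alongside) the segment index in the \Cref{theo:distanceweightedpathi} encoding, and to case‑split on the spine length: when $m$ is small the depth fits in $\lceil\log m\rceil$ fixed bits and the slack is spent on the tiebreaker, while when $m$ is large the weighted‑path compression handles the depth and the heavy‑end leaves are encoded relative to the far end of the spine. Verifying that the decoder's case analysis (spine/spine, spine/leaf, leaf/leaf, same leaf) is consistent across \emph{all} caterpillars with a single fixed decoder, and that every floor/ceiling and delimiter really nets out to $2\log n-\log\log n+O(\log\log\log n)$, is the part that needs to be carried out carefully.
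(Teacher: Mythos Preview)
Your reduction points the weighted--path compression at the wrong objects. The auxiliary path you build has one node per \emph{leaf}, so it has $k=L$ nodes, and $L$ can be $\Theta(n)$. The cost in \Cref{theo:distanceweightedpathi} is $\tfrac{k-1}{k}\log n+O(\log k)$: when $k=\Theta(n)$ the zeroed segment has length $\ell=\Theta(1)$ while the index costs $\Theta(\log n)$, so you lose rather than save. The patch ``let the tiebreaker be the segment index'' is precisely taking $k=L$ (in that scheme the index \emph{is} the node's position on the path), so it does not change the arithmetic. The case split on $m$ also misses the hard instance: a caterpillar with $m=n/2$ spine nodes one of which carries $n/2$ leaves needs $\log(n/2)$ bits for the depth \emph{and} $\log(n/2)$ bits for the sibling tiebreaker at that node; your ``large $m$'' branch offers no concrete mechanism to compress either. (There is also a secondary issue: recovering $d(\ell^{(t)})$ from your auxiliary root distance requires knowing $t$ \emph{and} $d(\ell^{(1)})$, and separating the depth field from the tiebreaker when their lengths depend on the particular caterpillar costs a $\Theta(\log\log n)$ delimiter, exactly the term you are trying to remove.)

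The paper's proof applies \Cref{theo:distanceweightedpathi} to \emph{spine nodes}, not leaves, and only to the high--degree ones. Partition spine nodes by whether they have more than $n/k$ leaf children. At most $k$ spine nodes are ``heavy'', so their $\distroot$ values form a weighted path of length $\le k$ and compress to $\tfrac{k-1}{k}\log n+O(\log k)$ bits; their children still need a full $\log n$-bit tiebreaker. ``Light'' spine nodes keep a full $\log n$-bit $\distroot$ (shifted by the same $x$), but their children need only a $\log(n/k)$-bit tiebreaker. Balancing the two cases at $k=\Theta(\log n/\log\log n)$ gives $2\log n-\log\log n+O(\log\log\log n)$, with the delimiter costing only $O(\log\log k)=O(\log\log\log n)$ because $k$ is polylogarithmic. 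The idea you are missing is this degree-based split, which is what guarantees the path fed to \Cref{theo:distanceweightedpathi} is short.
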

\begin{proof}
We will start by giving a simple $2\log n$ bits scheme and then improve it. 
The simple solution assigns two numbers to each node. The  nodes on the spine save $\distroot$ and the number $0$. The nodes not on the spine save their parent's $\distroot$ and a number that is unique among their siblings. The second number is required to distinguish  siblings, and hence determine if the distance between two nodes is $0$ or $2$. The worst-case label size for this solution is $2\log n+O(1)$.

To improve the solution, we split up the nodes on the spine into two groups: (1) nodes with $> \frac{n}{k}$ leaves and  (2) nodes with $\leq \frac{n}{k}$ leaves, for some parameter $k$ to be chosen later. 
We add the root to the first group no matter what. 
Note that the first group can contain at most $k$ nodes.

As before, all nodes save two numbers: $\distroot$ and the number $0$ for spine nodes or a number to distinguish siblings. The idea is to reduce label size with $\log k$ bits by using fewer bits for the first number for nodes in the first group and for the second number for nodes in the second group.

The nodes in the first group form a path with at most $k$ nodes and edge weights in $[1,n]$ (where each weight corresponds to the distance between the nodes in the original graph).
The algorithm from \Cref{theo:distanceweightedpathi} will add a number $x$, which is less than the diameter, which again is less than $n$, to the numbers representing the root distances of the nodes. 
Using this technique, we can, as seen in the proof of \Cref{theo:distanceweightedpathi}, encode the (modified) $\distroot$s of the nodes in the first group with only $\frac{k-1}{k}\log n + \log k+O(\log\log k)$ bits. This gives labels of size $\frac{2k-1}{k}\log n+\log k+O(\log\log k)$ for non-spine nodes whose parents are in the first group.

We will also add $x$ to the $\distroot$s of nodes in the second group, but since $x<n$ this will not change the label size by more than a single bit.
For non-spine nodes whose parents are in the second group, we need at most $\log n -\log k +O(1)$ bits for the second number, giving a total label size of $2\log n -\log k +O(1)$. 

Finally, since the two numbers that form a label now have different lengths, we need an additional $O(\log\log k)$ bits to determine when one number ends and the next begins. Indeed, it wil be possible to split up labels into their components if we know the number of bits used to write $\ceil{\log k}$, and we represent this number with $O(\log\log k)$ bits.

Setting $k=\frac{\log n}{2\log\log n}$, we now see that our worst-case label size is the maximum of 
\[
2\log n - \log k + O(\log\log k) = 2\log n - \log\log n + O(\log\log\log n)
\]
and
\begin{align*}
\frac{2k-1}{k}\log n+\log k + O(\log\log k) & = 2\log n -2 \log\log n + \log\log n + O(\log\log\log n) \\
 &= 2\log n-\log\log n + O(\log\log\log n).
\end{align*}
This proves the theorem.
\end{proof}

\subsection{Lower bound}
We present a technique that counts tuples of labels that are known to be distinct and compares the result to the number of tuples one can obtain with labels of size $L$. The technique may have applications to distance labeling for other families of graphs.

\begin{theorem} \label{theo:distancelowercaterpillars}
	For any $n\geq 4$, any distance labeling scheme for the family of caterpillars with at most $n$ nodes has a worst-case label size of at least $2\floor{\log n} - \floor{\log\floor{\log n}} -4$.
\end{theorem}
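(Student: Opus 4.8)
The plan is a counting argument: exhibit a large family $\F$ of caterpillars on at most $n$ nodes and, for each $C\in\F$, a tuple of labels of nodes of $C$, such that (a) within $\F$ these tuples are pairwise distinct, so that $|\F|$ is at most the number of label‑tuples of that arity whose entries all have length $\le L$, and (b) $|\F|$ is large enough that this forces $L\ge 2\floor{\log n}-\floor{\log\floor{\log n}}-4$. Note that a single caterpillar cannot force a long label — with $\le n$ nodes it has $\le n$ labels, which fit in $\approx\log n$ bits and can be decoded by a decoder tailored to that one tree — so the argument must genuinely play the one fixed decoder off against the whole family.

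Two elementary observations drive everything. First, two distinct leaves of the same caterpillar must receive distinct labels: if $e_C(\ell)=e_C(\ell')=x$ with $\ell\neq\ell'$, then $d(x,x)$ would have to equal both $\dist(\ell,\ell)=0$ and $\dist(\ell,\ell')\ge 2$. Second, spine positions are visible to the decoder: if $\ell$ hangs off spine node $s$, then $d(e_C(\ell),e_C(s'))$ for a spine endpoint $s'$ returns the spine position of $s$, and $d(e_C(\ell),e_C(\ell'))$ for leaves $\ell,\ell'$ off spine nodes $s,s'$ returns $|{\rm pos}(s)-{\rm pos}(s')|+2$. These let a carefully chosen ``witness'' tuple for $C$ encode, through the decoder alone, where the leaf‑bearing spine nodes sit — the first $\log n$ worth of information — while the distinctness of leaf labels from the first observation contributes a second, essentially independent, $\log n$ worth of information coming from how many leaves hang off each such node.

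For the construction I would use caterpillars with a long spine (so that leaf‑bearing positions range over $\approx n$ values) carrying $k$ large bundles of leaves (so that a bundle can contain up to $\approx n$ leaves), with a parameter $k$ splitting the $n$‑node budget between ``spine length'' and ``total bundle size'', and let $\F$ range over the placements of the $k$ bundles and their sizes; a witness tuple then records one (or all) leaf labels per bundle together with a few spine‑probe labels. Optimizing $k$ — the number of bundles — is exactly what produces the second‑order term: paying $\log k$ bits to record which bundle a witness belongs to, and balancing this against the two $\log n$ savings in the same way as in the caterpillar upper bound proved above, yields $k\approx\log n/\log\log n$ and a $-\log\log n$ correction, so that the lower bound matches that upper bound up to lower‑order terms.

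The main obstacle is making all of this fit together quantitatively and simultaneously: the spine must be long enough that bundle positions genuinely cost $\approx\log n$ bits, the bundles must be large enough that sibling identities genuinely cost another $\approx\log n$ bits, the whole caterpillar must have $\le n$ nodes, and the witness tuple — which is what is actually counted — must still pin down $C$ within $\F$. Keeping these competing demands compatible, and then extracting the clean closed form with the exact floors $2\floor{\log n}-\floor{\log\floor{\log n}}$ and the additive constant $-4$ out of the estimates (for instance via $\binom{N}{k}\le(Ne/k)^k$ to pass between the combinatorial count and the bit count), is where the real work lies, and it typically forces the construction to be deliberately a little wasteful rather than aiming at the tightest possible family.
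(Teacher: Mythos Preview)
Your overall plan is the paper's plan: build caterpillars with a spine of length $\approx n/2$ carrying $k$ leaf bundles, attach to each caterpillar many ordered $k$-tuples of leaf labels (one label per bundle), argue that all these tuples across the whole family are distinct, and compare the count to $(2^{L+1})^k$. The two observations you isolate are exactly the ones used. But your parameter choice is wrong, and with it the proof does not reach the stated bound. Once the counting is set up, the per-coordinate inequality one extracts is
\[
L \;\ge\; 2\log n \;-\; \frac{\log n}{k} \;-\; \log k \;-\; O(1),
\]
where the $\frac{\log n}{k}$ term arises because only $k-1$ of the $k$ bundle positions vary freely (one is anchored), and the $\log k$ term because each bundle holds only $\approx n/k$ leaves. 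Minimising $\frac{\log n}{k}+\log k$ gives $k=\Theta(\log n)$, and the paper takes $k=\floor{\log n}$. Your suggested $k\approx \log n/\log\log n$ is the optimum from the \emph{upper}-bound construction, where the trade-off is different; plugging it in here yields only $L\ge 2\log n - 2\log\log n + O(\log\log\log n)$, a full $\log\log n$ short of the theorem.

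Two smaller remarks. The paper dispenses with ``spine-probe'' labels by pinning the first bundle to spine position $1$; then for a leaf $u$ in bundle $1$ and a leaf $v$ in bundle $t$ one has $\dist(u,v)=i_t+1$, so an ordered $k$-tuple of leaf labels alone already determines $(i_1,\dots,i_k)$. And the relevant count is of \emph{ordered} $k$-tuples, bounded by $N^k$ with $N=2^{L+1}-1$; the estimate $\binom{N}{k}\le (Ne/k)^k$ you mention is the tool for the weighted-path lower bound, not this one.
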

\begin{proof}
	Set $k=\floor{\log n}$ and $m=2^k$. Let $(i_1,\dots ,i_k)$ be a sequence of $k$ numbers from the set $\{1,\dots ,m/2\}$ with the only requirement being that $i_1=1$. Now consider, for each such sequence, the caterpillar whose main path has length $m/2$ and where, for $t=1,\dots ,k$, the node in position $i_t$ has $\floor{m/2k}$ leaf children (not on the main path). We shall refer to these children as the \emph{$t$'th group}. Note that two disjoint groups of children may be children of the same node if $i_t=i_s$ for some $s,t$. Each of these caterpillar has $m/2+k\floor{m/2k}\leq m\leq n$ nodes.
	
	Suppose that $\scheme$ is a distance labeling scheme for the family of caterpillars, and consider one of the caterpillars defined above. Given distinct nodes $u,v$ not on the main path, their distance will be $\dist(u,v)=|i_s-i_t|+2$, where $i_s$ and $i_t$ are the positions on the main path of the parents of $u$ and $v$, respectively. In particular, if $s=1$, so that $i_s=1$, then $\dist(u,v)=i_t+1$.
	Thus, if $\scheme$ has been used to label the nodes of the caterpillar, the number $i_t$ for a child in the $t$'th group can be uniquely determined from its label together with the label of any of the children from the first group. It follows that any $k$-tuple of labels $(l_1,\dots ,l_k)$ where $l_t$ is a label of a child in the $t$'th group uniquely determines the sequence $(i_1,\dots ,i_k)$. In particular, $k$-tuples of labels from distinct caterpillars must be distinct. Of course, $k$-tuples of labels from the same caterpillar must also be distinct, since labels are unique in a distance labeling scheme.
	
	Now, there are $(m/2)^{k-1}$ choices for the sequence $(i_1,\dots ,i_k)$, and hence there are $(m/2)^{k-1}$ different caterpillars in this form. For each of these, there are $\floor{m/2k}^k$ different choices of $k$-tuples of labels. Altogether, we therefore have $(m/2)^{k-1}\floor{m/2k}^k$ distinct $k$-tuples of labels. If the worst-case label size of $\scheme$ is $L$, then we can create at most $(2^{L+1}-1)^k$ distinct $k$-tuples of labels, so we must have $(m/2)^{k-1}\floor{m/2k}^k \leq (2^{L+1}-1)^k$. From this it follow that

	\begin{align*}
		L & \geq \floor{\frac{k-1}{k}(\log m-1) + \log \floor{m/2k}} \\
		& \geq \floor{\frac{(k-1)^2}{k}+ k -\log k} - 2 \\
		& \geq 2k - \floor{\log k} -4\\
		&= 2\floor{\log n} - \floor{\log\floor{\log n}} -4. \qedhere
	\end{align*}
\end{proof}

% !TeX root = approxdist.tex

\section{Exact distances in trees}

\subsection{Upper bound}
Let $u,v$ be nodes in a tree $T$ and let $w$ be their nearest common ancestor. We then have

\begin{equation}\label{computdist}
	\dist(u,v) = \distroot(u) - \distroot(v) + 2\dist(w,v)
\end{equation} 
If $w=u$ so that $u$ is an ancestor of $v$, then the above equation is just a difference of distroots, which can be stored for each node with $\log n$ bits. The same observation clearly holds if $w=v$.

Assume now that $w\notin\{u,v\}$ so that $u$ and $v$ are not ancestors of each other. 
Consider the heavy-light decomposition~\cite{sleatortarjan} described in the preliniaries. At least one of the nodes $u$ and $v$ must have an ancestor which is a light child of $w$. Assume that it is $v$. Now, $v$ has at most $\log n$ light ancestors. Saving the distance to all of them together with $\distroot$ gives us sufficient information to compute the distance between $u$ and $v$ using equation~\eqref{computdist}. This is the idea behind \Cref{theo:disttrees} below.

By examining the NCA labeling scheme from \cite{AGKR04,alstrupnca2014}, we see that it can easily be extended as follows.
\begin{lemma}[\cite{AGKR04,alstrupnca2014}] \label{lemma:distfromlights}
There exists an NCA labeling scheme of size $O(\log n)$. For any two nodes $u,v$ the scheme returns the label of $w=\nca(u,v)$ as well as:
\begin{itemize}
	\item which of $u$ and $v$ (if any) have a light ancestor that is a child of $w$; and 
	\item the number of light nodes on the path from the root to $w$ and from $w$ to $u$ and $v$, respectively.
\end{itemize} 
\end{lemma}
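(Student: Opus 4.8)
The plan is to open up the nearest-common-ancestor labeling scheme of~\cite{AGKR04,alstrupnca2014} and observe that the data it already manipulates is essentially the heavy-path structure of root-to-node paths, so the two extra pieces of information we want are either already present in the labels or can be appended at an additive cost of $O(\log n)$ bits. Concretely, both cited schemes are built on the heavy-light decomposition: the label of a node $v$ records, in order, the sequence of heavy paths traversed on the way from the root down to $v$ — equivalently the sequence of light ancestors $r=a_0,a_1,\dots,a_\ell=\apex(v)$ of $v$ — together with, for each such heavy path, the depth at which the root-to-$v$ path enters it and the depth at which it leaves it (the position of the next light ancestor, or of $v$ on the final path). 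Given two such labels one recovers $w=\nca(u,v)$ by taking the longest common prefix of these two heavy-path sequences and then comparing depths on the first heavy path where they diverge; this is exactly what the NCA scheme does, and it can be implemented with $O(\log n)$ bits since there are at most $\log n$ light ancestors.

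For the ``light-ancestor-child'' output, let $P$ be the heavy path containing $w$. At most one child of $w$ lies on $P$ (its heavy child), and every other child of $w$ is light; hence a node $x\in\{u,v\}$ with $x\neq w$ has a light ancestor that is a child of $w$ if and only if the root-to-$x$ path does \emph{not} continue along $P$ immediately below $w$, while if $x=w$ then $x$ has no ancestor strictly below $w$ at all, which is the ``(if any)'' case. Whether the root-to-$x$ path stays on $P$ below $w$ is visible from $x$'s label: it is the question of whether, in $x$'s stored heavy-path sequence, the heavy path containing $w$ extends past the depth of $w$. Comparing this for $u$ and for $v$ against the depth of $w$ on $P$ (which the NCA query has already located) yields the answer in all configurations, including the one where two distinct light children of $w$ are the parents of $u$ and $v$.

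For the three counts of light nodes, the key observation is that every heavy path begins at a light node (the root is light, and the topmost node of any other heavy path is a light child of its parent) and, conversely, every light node is the top of exactly one heavy path; therefore the number of light nodes on $T[r,w]$ equals the number of distinct heavy paths that path meets, and likewise for $T(w,u]$ and $T(w,v]$. Since $T[r,w]$ is a prefix of $T[r,u]$ and of $T[r,v]$ and the NCA query has already pinpointed $w$ inside both labels, all three counts can be read off directly: the common prefix up to $w$ gives the first, and the remaining suffixes of $u$'s and $v$'s heavy-path sequences give the other two. If the original scheme does not already expose these running counts, we append to each label its number of light ancestors, an integer in $[0,\log n]$ costing only $O(\log\log n)$ extra bits; partial sums along the sequence then recover the suffix counts. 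The scheme already returns the label of $w$, so the full stated output is available, and since we have added at most $O(\log\log n)$ bits per node the size remains $O(\log n)$.

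The main obstacle I expect is purely expository: pinning down the precise internal encoding used in the cited NCA schemes and verifying the boundary cases — $w$ light versus heavy, $w\in\{u,v\}$, two light groups sharing the parent $w$ — so that ``which of $u$ and $v$ (if any)'' comes out correctly in every case, and checking that reading the light-node counts off the prefix and suffixes of the stored sequences is consistent with the way $w$'s position is identified during the query. Once the encoding is fixed, the rest is bookkeeping and the $O(\log n)$ bound is immediate.
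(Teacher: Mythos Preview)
Your proposal is correct and is essentially a fleshed-out version of what the paper does: the paper gives no proof at all beyond the sentence ``By examining the NCA labeling scheme from~\cite{AGKR04,alstrupnca2014}, we see that it can easily be extended as follows,'' whereas you actually open up the heavy-light-based encoding and verify that the required outputs can be read off. Your argument is sound and matches the intended approach; the only caveat is that the paper treats this as a black-box citation rather than something to be reproved, so the level of detail you provide exceeds what is expected.
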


\begin{theorem}\label{theo:disttrees}
There exists a distance labeling scheme for trees with worst-case label size $\frac{1}{2}\log^2 n+O(\log n)$.
\end{theorem}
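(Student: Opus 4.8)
The plan is to assemble the label of a node $v$ from two parts. The first part is the $O(\log n)$-bit sublabel that $v$ receives in the NCA scheme of \Cref{lemma:distfromlights}. The second part records, for each light ancestor of $v$, the distance from $v$ to that ancestor. Write $b_1=r, b_2,\dots,b_m$ for the light ancestors of $v$ listed from the root downwards, where $b_m$ is the apex of $v$; by the heavy--light decomposition $m\le\log n$. Since $b_1=r$, the first recorded distance is $\distroot(v)$ itself. For decoding we will combine these distances with equation~\eqref{computdist}.

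The point of the construction is that the subtrees of successive light ancestors shrink geometrically, so the recorded distances cost few bits. Concretely, $\parent(b_j)$ is a descendant of (or equal to) $b_{j-1}$, so $|T_{\parent(b_j)}|\le|T_{b_{j-1}}|$; and since $b_j$ is light, its heavy sibling carries a subtree at least as large, whence $|T_{b_j}|<|T_{b_{j-1}}|/2$. Inductively $|T_{b_j}|\le n/2^{j-1}$, and therefore $\dist(v,b_j)<|T_{b_j}|\le n/2^{j-1}$, a number that fits in about $\log n-j$ bits. I will store the $m$ distances consecutively, allotting to the $j$-th entry a number of bits depending only on $j$ and $\ceil{\log n}$ (so no separators are needed inside the list), preceded by an $O(\log\log n)$-bit header giving $\ceil{\log n}$ and $m$ so that the decoder can parse the label. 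Summing $\sum_{j=1}^{m}(\log n-j+O(1))$ over $m\le\log n$ telescopes to $\frac{1}{2}\log^2 n+O(\log n)$, and the first part together with the header add only $O(\log n)$ more, giving the claimed label size.

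For decoding, apply \Cref{lemma:distfromlights} to the labels of $u$ and $v$ to obtain $w=\nca(u,v)$, the flag saying which (if either) of $u,v$ has a light ancestor that is a child of $w$, and the number $p$ of light nodes on the path from the root to $w$. Suppose the flag points to $v$ (the case of $u$ is symmetric, with the roles in~\eqref{computdist} exchanged). The light child $c$ of $w$ on the path to $v$ is then exactly the $(p{+}1)$-st light ancestor of $v$, since the first $p$ light ancestors of $v$ are precisely the light nodes of $T[r,w]$ and $c$ is the next one; hence the $(p{+}1)$-st recorded distance is $\dist(c,v)$ and $\dist(w,v)=1+\dist(c,v)$. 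Reading $\distroot(u)$ and $\distroot(v)$ off as the first recorded entries of the two labels and substituting into~\eqref{computdist} yields $\dist(u,v)$; this formula is also correct when $u=w$, so that subcase needs no separate treatment. If the flag points to neither node, then $w\in\{u,v\}$ (the argument preceding \Cref{lemma:distfromlights} shows that otherwise one of $u,v$ would have a light ancestor that is a child of $w$), and $\dist(u,v)=|\distroot(u)-\distroot(v)|$.

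I expect the main obstacle to be the size bound for the second part: improving the naive $\log^2 n$ to $\frac{1}{2}\log^2 n$ rests on the geometric decay of the $|T_{b_j}|$ together with encoding the list of distances so that each entry's length is dictated by its position rather than stored explicitly, which keeps the bookkeeping overhead down to $O(\log n)$ instead of $O(\log n\log\log n)$. The decoding argument is then essentially routine, since \Cref{lemma:distfromlights} delivers exactly the indexing data — which of $u,v$ to use and the value of $p$ — needed to select the right stored distance.
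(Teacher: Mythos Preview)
Your proposal is correct and follows essentially the same approach as the paper: store the NCA sublabel plus $\distroot$ together with the list of distances from $v$ to its light ancestors, encoded in fixed-width fields whose lengths decrease by one bit per entry thanks to the geometric shrinking of $|T_{b_j}|$, and decode via~\eqref{computdist}. Your decoding is in fact spelled out more carefully than the paper's (you identify the relevant light ancestor as $b_{p+1}$ and use $\dist(w,v)=1+\dist(b_{p+1},v)$, whereas the paper's ``adding together the relevant distances'' is somewhat loose), but the construction and the size analysis are the same.
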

\begin{proof}
We use $\Oh(\log n)$ bits for the extended NCA labeling in  \Cref{lemma:distfromlights} and for $\distroot$. 
Using~\eqref{computdist} it now only remains to efficiently represent, for each node, the distance to all its light ancestors.
We consider the light ancestors of a node $v$ encountered on the path from the root to $v$. The distance from $v$ to the root 
is at most $n-1$ and can therefore be encoded with \emph{exactly} $\ceil{\log n}$ bits (by adding leading zeros if needed). By construction of the heavy-light decomposition, the next light node on the path to $v$ will be the root of a subtree of size at most $n/2$, meaning that the distance from $v$ to that ancestor is at most $n/2-1$ and can be encoded with \emph{exactly} $\ceil{\log n}-1$ bits. Continuing this way, we encode the $i$'th light ancestor on the path from the root to $v$ with exactly $\ceil{\log n}-i$ bits. When we run out of light ancestors, we concatenate all the encoded distances, resulting in a string of length at most
\[
\ceil{\log n} + (\ceil{\log n}-1) + \cdots + 2 + 1 = \frac{1}{2}\ceil{\log n }^2 +\frac{1}{2}\ceil{\log n }.
\]
We can use $O(\log n)$ extra bits to encode $n$ and to separate all sublabels from each other. The decoder can now determine $\ceil{\log n}$ and split up the entries in the list of distances. When applying formula~\eqref{computdist}, it can then determine the distance between $v$ and $w$ by adding together the relevant distances in the list of light ancestors, using the fact from \Cref{lemma:distfromlights} that it knows the number of light ancestors from $v$ to $w$.
\end{proof}

% !TeX root = approxdist.tex

\subsection{Lower bound}

In the case of general trees, Gavoille et al~\cite{Gavoille200485} establish a lower bound  of $\frac{1}{8}\log^2 n-O(\log n)$ using an ingenious technique where they apply a distance labeling scheme to a special class of trees called $(h,M)$-trees\footnote{Note that their exposition has some minor errors as pointed out (and corrected) in~\cite{esbenthesis}}. The following uses a generalization of $(h,M)$-trees to improve their ideas and leads to a lower bound of $\frac{1}{4}\log^2 n-O(\log n)$.

\paragraph{$(h,W,a)$-trees.}
We begin with some definitions. For integers $h,W\geq 0$ and a number $a\geq 1$ such that $W/a^i$ is integral for all $i=0,\dots ,h$, an \emph{$(h,W,a)$-tree} is a rooted binary tree $T$ with edge weights in $[0,W]$ that is constructed recursively as follows. For $h=0$, $T$ is just a single node. For $h=1$, $T$ is a claw (i.e.\ a star with three edges) with edge weights $x,x,W-x$ for some $0\leq x<W$ rooted at the leaf node of the edge with weight $W-x$. For $h>1$, $T$ consists of an $(1,W,a)$-tree whose two leaves are the roots of two $(h-1,W/a,a)$-trees $T_0,T_1$.  We shall denote an $(h,W,a)$-tree constructed in this way by $T=\langle T_0,T_1,x\rangle$
An example for $h=3$ can be seen in \Cref{fig:hMtree}. Note that the case $a=1$ simply corresponds to the $(h,W)$-trees defined in~\cite{Gavoille200485}.

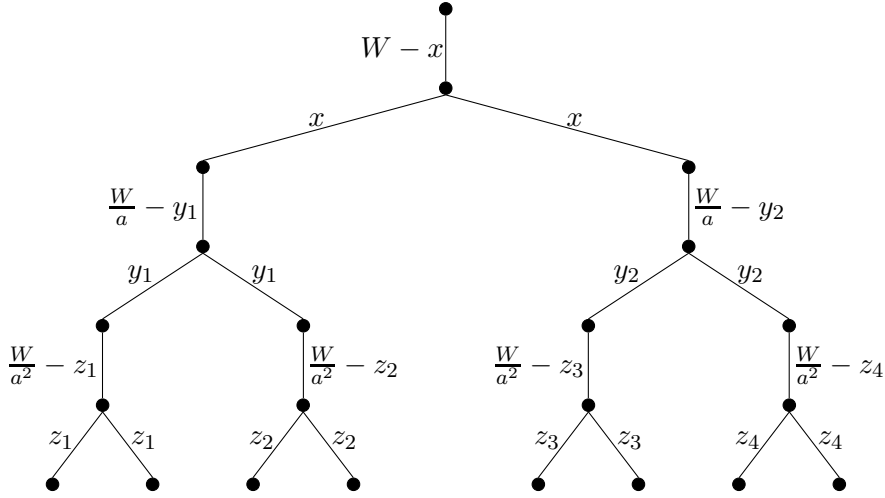
\begin{figure}
	\centering
	\begin{tikzpicture}[sibling distance=3em, inner sep=0pt, minimum size=1ex]
	\tikzstyle{every node}=[draw,circle,fill=black]
	\Tree [.{} \edge node[draw=none,fill=none,auto=right] {$W-x$}; [.{} 
		\edge node[draw=none,fill=none,auto=right] {$x$}; [.{} \edge node[draw=none,fill=none,auto=right] {$\frac{W}{a}-y_1$}; [.{} 
			\edge node[draw=none,fill=none,auto=right] {$y_1$}; [.{} \edge node[draw=none,fill=none,draw=none,fill=none,auto=right] {$\frac{W}{a^2}-z_1$}; [.{} \edge node[draw=none,fill=none,auto=right,pos=.6] {$z_1$}; [ .{} ] \edge node[draw=none,fill=none,auto=left,pos=.6] {$z_1$}; [ .{} ] ] ]
			\edge node[draw=none,fill=none,auto=left] {$y_1$}; [.{} \edge node[draw=none,fill=none,auto=left] {$\frac{W}{a^2}-z_2$}; [.{} \edge node[draw=none,fill=none,auto=right,pos=.6] {$z_2$}; [ .{} ] \edge node[draw=none,fill=none,auto=left,pos=.6] {$z_2$}; [ .{} ] ] ]
			]
		]
		\edge node[draw=none,fill=none,auto=left] {$x$}; [.{}	\edge node[draw=none,fill=none,auto=left] {$\frac{W}{a}-y_2$}; [.{}
			\edge node[draw=none,fill=none,auto=right] {$y_2$}; [.{} \edge node[draw=none,fill=none,auto=right] {$\frac{W}{a^2}-z_3$}; [.{} \edge node[draw=none,fill=none,auto=right,pos=.6] {$z_3$}; [ .{} ] \edge node[draw=none,fill=none,auto=left,pos=.6] {$z_3$}; [ .{} ] ] ]
			\edge node[draw=none,fill=none,auto=left] {$y_2$}; [.{} \edge node[draw=none,fill=none,auto=left] {$\frac{W}{a^2}-z_4$}; [.{} \edge node[draw=none,fill=none,auto=right,pos=.6] {$z_4$}; [ .{} ] \edge node[draw=none,fill=none,auto=left,pos=.6] {$z_4$}; [ .{} ] ] ] 
			] 
		] 
	] ]
	\end{tikzpicture}
	\caption{An $(h,W,a)$-tree, where $h=3$. We require that $x<W$, $y_1,y_2<W/a$ and $z_1,\dots ,z_4<W/a^2$.
}
	\label{fig:hMtree}
\end{figure}

It is easy to see that an $(h,W,a)$-tree
has $2^h$ leaves and $3\cdot 2^h-2$ nodes.  Further, it is straightforward to see that, if $u,v$ are leaves in an $(h,W,a)$-tree $T=\langle T_0,T_1,x\rangle$, then
\begin{equation} \label{eq:disthM}
\dist_T(u,v)= \begin{cases} 2W\frac{a^{-1}-a^{-h}}{1-a^{-1}} + 2x,& \text{if $u\in T_0$ and $v\in T_1$, or vice versa,} \\ \dist_{T_i}(u,v), & \text{if $u,v\in T_i$ for some $i=0,1$}.\end{cases}
\end{equation}

\paragraph{Leaf distance labeling schemes.}
In the following we shall consider \emph{leaf distance labeling schemes} for the family of $(h,W,a)$-trees: that is, distance labeling schemes where only the leaves in a tree need to be labeled, and where only leaf labels can be given as input to the decoder. Since an ordinary distance labeling scheme obviously can be used only for leaves, any lower bound on worst-case label sizes for a leaf distance labeling scheme is also a lower bound for an ordinary distance labeling scheme. We denote by $g(h,W,a)$ the smallest number of labels needed by an optimal leaf distance labeling scheme to label all $(h,W,a)$-trees.
\begin{lemma} \label{lemm:distancehM}
For all $h\geq 1$ and $W\geq 2$, $g(h,W,a)^2\geq Wg(h-1,W^2/a^2,a^2)$.
\end{lemma}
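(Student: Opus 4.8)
The plan is to exploit the recursive structure $T=\langle T_0,T_1,x\rangle$ of $(h,W,a)$-trees together with the distance formula~\eqref{eq:disthM}, which says that the cross-distance between a leaf in $T_0$ and a leaf in $T_1$ equals a fixed constant (depending only on $h,W,a$) plus $2x$. The key point is that the value of $x$ is a free parameter ranging over $\{0,1,\dots,W-1\}$, so there are $W$ different choices of $x$, and the cross-distance reveals $x$ exactly. My strategy is to combine a leaf labeling scheme for $(h-1,\cdot,\cdot)$-trees (applied to the two halves) with the additional information of $x$, and show that a leaf labeling scheme for $(h,W,a)$-trees must be at least as rich as this combined object — this is what produces the $g(h,W,a)^2 \ge W\cdot g(h-1,\cdot,\cdot)$ inequality. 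The shift from parameters $(W,a)$ at level $h-1$ to $(W^2/a^2, a^2)$ is the somewhat delicate bookkeeping part: since the recursion naturally passes from $(h,W,a)$ to $(h-1,W/a,a)$, I expect the squaring of $W$ and $a$ to come from squaring $g$ (i.e.\ pairing two label sets) rather than from the tree recursion directly, so I would look for a way to ``merge'' two copies of an $(h-1,W/a,a)$-tree into one $(h-1,W^2/a^2,a^2)$-tree, or equivalently re-index the recursion.

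Concretely, here is how I would carry it out. Fix an optimal leaf distance labeling scheme $\scheme$ for the family of $(h,W,a)$-trees, using a set $\mathcal{L}$ of $g(h,W,a)$ labels. For each choice of $x\in\{0,\dots,W-1\}$ and each pair of $(h-1,W/a,a)$-trees $(T_0,T_1)$ drawn from some hard instance family, form $T=\langle T_0,T_1,x\rangle$ and label it with $\scheme$. From the labels of the leaves we can (a) recover $x$ by querying one leaf of $T_0$ against one leaf of $T_1$ and subtracting the known constant from~\eqref{eq:disthM}; (b) recover all pairwise distances inside $T_0$ and inside $T_1$, since those equal the corresponding distances in the subtrees by~\eqref{eq:disthM}. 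Hence the restriction of the leaf labels to $T_0$ (resp.\ $T_1$) acts as a valid leaf distance labeling of an $(h-1,W/a,a)$-tree, once we also know $x$ — but actually we don't even need $x$ for the intra-subtree distances, so the labels restricted to $T_0$ alone already distinguish among $(h-1,W/a,a)$-trees. The counting step is then: the number of distinct (ordered) pairs of label-multisets we can realize on $(T_0,T_1)$ over all choices is at least $W$ times the number needed to distinguish $(h-1,\cdot,\cdot)$-trees (the factor $W$ coming from the recoverable $x$), while the number of such pairs available from $\scheme$ is at most $|\mathcal{L}|^2 = g(h,W,a)^2$. Matching these gives the claimed bound, after I verify that distinct source instances really do force distinct label-pairs — which is where the $x$-recovery argument is essential, since it lets a single cross-query separate instances that differ only in $x$.

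The main obstacle I anticipate is getting the parameter arithmetic to land exactly on $(h-1, W^2/a^2, a^2)$ rather than on $(h-1,W/a,a)$. One clean way: observe that an $(h,W,a)$-tree ``contains'' at depth $1$ a pair of $(h-1,W/a,a)$-trees, and that pairing a leaf-label set for the $h$-level tree with itself (the square $g(h,W,a)^2$) should correspond to choosing \emph{two independent} sub-instances, whose combined parameter is the product — $W/a \cdot W/a = W^2/a^2$ in the ``width'' and $a\cdot a = a^2$ in the ``decay'' — provided the integrality condition $(W^2/a^2)/(a^2)^i \in \mathbb{Z}$ holds, which should follow from the hypothesis that $W/a^i$ is integral. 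I would also need to be careful that $h\ge 1$ and $W\ge 2$ are exactly the conditions making the construction nondegenerate (for $h=1$ the tree is a claw and the ``sub-instances'' are single nodes with $g(0,\cdot,\cdot)=1$, and we need $W\ge 2$ so that there is more than one choice of $x$ — giving the base-case sanity check $g(1,W,a)^2\ge W$). So my write-up order would be: (1) state the distance formula consequence and the $x$-recovery lemma; (2) set up the instance family as pairs of sub-instances indexed additionally by $x$; (3) argue distinctness of label-tuples; (4) do the counting and the parameter arithmetic, flagging the integrality check.
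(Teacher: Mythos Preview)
Your high-level plan matches the paper's: partition cross-pairs of labels by the recoverable value of $x$ and show that each slice is large enough. But two points are not yet right.

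First, the counting object. You speak of ``pairs of label-multisets on $(T_0,T_1)$'', but the quantity $g(h,W,a)^2$ bounds the number of pairs of \emph{individual} labels, not pairs of multisets. The paper's argument is: for each $x$, let $S(x)$ be the set of pairs $(l(u),l(v))$ with $u$ a leaf of $T_0$ and $v$ a leaf of $T_1$, over all $(h,W,a)$-trees $\langle T_0,T_1,x\rangle$. Distinct $x$'s give disjoint $S(x)$'s (your $x$-recovery observation), and $\sum_x |S(x)|\le g(h,W,a)^2$. So what you must show is $|S(x)|\ge g(h-1,W^2/a^2,a^2)$, i.e.\ that $S(x)$ itself serves as a label set for a leaf distance scheme on $(h-1,W^2/a^2,a^2)$-trees. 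Your multiset formulation does not give the upper bound $g(h,W,a)^2$ and should be replaced.

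Second, and this is the actual gap: you correctly flag the parameter shift as ``the somewhat delicate bookkeeping part'' and guess that one should ``merge two copies of an $(h-1,W/a,a)$-tree into one $(h-1,W^2/a^2,a^2)$-tree'', but you never say how, nor do you say why distances survive. This is the heart of the proof. The paper does the inverse direction explicitly: set $V=W/a$; given an $(h-1,V^2,a^2)$-tree $T'$, write each top-level weight $y<V^2$ in base $V$ as $y=y_0+y_1V$, and define $\phi_i(T')$ to be the $(h-1,V,a)$-tree obtained by replacing each such $y$ by $y_i$ (recursively through the levels). Then $T=\langle \phi_0(T'),\phi_1(T'),x\rangle$ is an $(h,W,a)$-tree, and the leaf $v$ in $T'$ is labeled by the pair $(l(\phi_0(v)),l(\phi_1(v)))\in S(x)$. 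The crucial check is the distance-recovery identity
\[
\dist_{T'}(u,v)=\bigl(\dist_{\phi_0(T')}(\phi_0(u),\phi_0(v))\bmod 2V\bigr)+V\cdot \dist_{\phi_1(T')}(\phi_1(u),\phi_1(v)),
\]
which follows from~\eqref{eq:disthM} and the base-$V$ decomposition. Without this encoding and identity, there is no reason the label-pair set $S(x)$ is rich enough to label the larger-parameter family, and your ``product of widths, product of decays'' heuristic does not by itself supply one. Once you add this construction, your outline becomes the paper's proof.
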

\begin{proof}
Fix an optimal leaf distance labeling scheme $\scheme$ which produces exactly $g(h,W,a)$ distinct labels for the family of $(h,W,a)$-trees. For leaves $u$ and $v$ in an $(h,W,a)$-tree, denote by $l(u)$ and $l(v)$, respectively, the labels assigned by $\scheme$. For $x=0,\dots ,W-1$, let $S(x)$ be the set consisting of pairs of labels $(l(u),l(v))$ for all leaves $u\in T_0$ and $v\in T_1$ in all $(h,W,a)$-trees $T=\langle T_0,T_1,x\rangle$.

The sets $S(x)$ and $S(x')$ are disjoint for $x\neq x'$, since every pair of labels in $S(x)$ uniquely determines $x$ due to~\eqref{eq:disthM}. Letting $S=\bigcup_{x=0}^{W-1}S(x)$, we therefore have $|S|=\sum_{x=0}^{W-1}|S(x)|$. 
Since $S$ contains pairs of labels produced by $\scheme$ from leaves in $(h,W,a)$-trees , we clearly also have $|S|\leq g(h,W,a)^2$, and hence it only remains to prove that $|S|\geq Wg(h-1,W^2/a^2,a^2)$, which we shall do by showing that $|S(x)|\geq g(h-1,W^2/a^2,a^2)$ for all $x$.

The goal for the rest of the proof is therefore to create a leaf distance labeling scheme for $(h-1,W^2/a^2,a^2)$-trees using only labels from the set $S(x)$ for some fixed $x$. So let $x$ be given and consider an $(h-1,W^2/a^2,a^2)$-tree $T'$. Let  $V=W/a$. From $T'$ we shall construct an $(h-1,V,a)$-tree $\phi_i(T')$ for $i=0,1$ such that every leaf node $v$ in $T'$ corresponds to nodes $\phi_i(v)$ in $\phi_i(T')$ for $i=0,1$.
The trees $\phi_i(T')$ are defined as follows.
If $h=1$, so that $T'$ consists of a single node, then $\phi_i(T')=T'$ for $i=0,1$. 
If $h>1$, then $T'$ is in the form $T'=\langle T'_0,T'_1,y\rangle$ for some $0\leq y< V^2$. We can write $y$ in the form $y=y_0+y_1V$ for uniquely determined $y_0,y_1$ with $0\leq y_0,y_1<V$. For $i=0,1$, we recursively define $\phi_i(T') = \langle \phi_i(T'_0), \phi_i(T'_1),y_i\rangle$. Thus, $\phi_i(T')$ is an $(h-1,V,a)$-tree that is similar to $T'$ but where we replace the top edge weight $y$ by edge weights $y_i$ and, recursively, do the same for all $(h-2,V^2/a^2,a^2)$-subtrees. Note also that the corresponding edge weight $V^2-y$ in $T'$ automatically is replaced by the edge weight $V-y_i$ in $\phi_i(T')$ in order for $\phi_i(T')$ to be an $(h-1,V,a)$-tree.

Denote by $\phi_i(v)$ the leaf in $\phi_i(T')$ corresponding to the leaf $v$ in $T'$.

Consider now the $(h,W,a)$-tree $T=\langle \phi_0(T'),\phi_1(T'),x\rangle$. Every leaf $v$ in $T'$ corresponds to the leaves $\phi_0(v),\phi_1(v)$ in $T$ where $\phi_i(v)\in \phi_i(T')$ for $i=0,1$. 
Using  formula~\eqref{eq:disthM} for the distances in $T'$, it is straightforward to see that
\begin{equation*} \label{eq:disthMdistances}
\dist_{T'}(u,v) = \left(\dist_{\phi_0(T')}(\phi_0(u),\phi_0(v)) \bmod (2V)\right) + V\dist_{\phi_1(T')}(\phi_1(u),\phi_1(v)).
\end{equation*}

We can now apply the leaf distance labeling scheme $\scheme$ to $T$ and obtain a label for each leaf node in $T$. In particular, the pair of leaves $(\phi_0(v),\phi_1(v))$ corresponding to a node $v$ in $T'$ will receive a pair of labels. We use this pair to label $v$ in $T'$, whereby we have obtained a labeling of the leaves in $T'$ with labels from $S(x)$. Using the formula in~\eqref{eq:disthMdistances} we can construct a decoder that can compute the distance between two nodes in $T'$ using these labels alone, and hence we have obtained a leaf distance labeling scheme for $(h-1,V^2,a^2)$-trees using only labels from $S(x)$ as desired.
\end{proof}

\begin{lemma} \label{lemm:distancehM2}
For all $h\geq 1$ and $W\geq 2$, $g(h,W,a)\geq \frac{W^{h/2}}{a^{h(h-1)/4}}$.
\end{lemma}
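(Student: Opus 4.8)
The plan is to prove the bound by induction on $h$, with \Cref{lemm:distancehM} as the engine of the inductive step. For the base case $h=1$ I would argue directly: an $(1,W,a)$-tree is determined by a single parameter $x\in\{0,\dots,W-1\}$, and its two leaves (the single-node subtrees $T_0,T_1$) lie at distance $2x$, so in any leaf distance labeling scheme the ordered pair of labels assigned to those two leaves must determine $x$. Hence distinct values of $x$ force distinct label pairs, and if the scheme uses $N:=g(1,W,a)$ distinct labels it can form at most $N^2$ ordered pairs; thus $N^2\ge W$, i.e.\ $g(1,W,a)\ge W^{1/2}=W^{1/2}/a^{0}$, which is exactly the claimed bound for $h=1$. (Alternatively one may invoke \Cref{lemm:distancehM} at $h=1$, using that one label suffices for the one-node $(0,\cdot,\cdot)$-trees so that $g(0,\cdot,\cdot)=1$.)

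For the inductive step, fix $h\ge 2$ and assume the bound holds for $h-1$ with every admissible choice of parameters. I would apply it with $W'=W^2/a^2$ and $a'=a^2$, after checking admissibility: $W'/(a')^i=W^2/a^{2(i+1)}$ is integral for $i=0,\dots,h-1$ because $W/a^j$ is integral for $j\le h$, and $W'\ge 2$ because $W/a$ is a positive integer that cannot equal $1$ (else $W=a$, forcing $W/a^2=1/a\notin\mathbb{Z}^+$ against $W\ge 2$), so in fact $W/a\ge 2$ and $W'=(W/a)^2\ge 4$. Then \Cref{lemm:distancehM} followed by the induction hypothesis gives
\begin{align*}
g(h,W,a)^2 &\ge W\, g\!\left(h-1,\;W^2/a^2,\;a^2\right) \\
&\ge W\cdot\frac{(W^2/a^2)^{(h-1)/2}}{(a^2)^{(h-1)(h-2)/4}}
= \frac{W^{h}}{a^{\,h(h-1)/2}},
\end{align*}
where the last step simplifies the exponent of $a$ via $(h-1)+(h-1)(h-2)/2=h(h-1)/2$. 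Taking square roots gives $g(h,W,a)\ge W^{h/2}/a^{h(h-1)/4}$, as required.

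I do not anticipate a real obstacle: once the recursive inequality of \Cref{lemm:distancehM} is in hand, this lemma is essentially bookkeeping. The only spots that need care are confirming that the recursive parameters $(h-1,W^2/a^2,a^2)$ are admissible and satisfy the ``$W\ge 2$'' hypothesis — which is the edge-case check carried out above — and keeping the exponent arithmetic straight when taking the square root.
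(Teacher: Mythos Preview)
Your proposal is correct and follows essentially the same approach as the paper: induction on $h$, with the base case handled via $g(0,\cdot,\cdot)=1$ (the paper uses exactly your ``alternative'' route through \Cref{lemm:distancehM}), and the inductive step obtained by applying \Cref{lemm:distancehM} with parameters $(h-1,W^2/a^2,a^2)$ and simplifying the exponents. Your extra admissibility check for the recursive parameters (integrality of $W'/(a')^i$ and $W'\ge 2$) is something the paper omits, but it is a welcome bit of care rather than a departure in method.
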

\begin{proof}
The proof is by induction on $h$. For $h=1$ we note that an $(0,W,a)$-tree has only one node, so that $g(0,W^2/a^2,a^2)=1$. \Cref{lemm:distancehM} therefore yields $g(1,W,a)^2\geq W$ from which it follows that $g(1,W,a)\geq \sqrt{W}$. The lemma therefore holds for $h=1$. Now let $h>1$ and assume that the lemma holds for $h-1$. \Cref{lemm:distancehM} and the induction hypothesis now yield
\begin{align*}
g(h,W,a)^2 &\geq Wg(h-1,W^2/a^2,a^2) \\
 & \geq W  \frac{(W^2/a^2)^{(h-1)/2}}{a^{2(h-1)(h-2)/4}}  \\
 &= \frac{W^h}{a^{h(h-1)/2}}
\end{align*}
from which it follows that $g(h,W,a)\geq  \frac{W^{h/2}}{a^{h(h-1)/4}}$ as desired.
\end{proof}

The previous lemma implies that any (leaf and hence also ordinary) distance labeling scheme for $(h,W,a)$-trees must have labels with worst-case length at least $\frac{h}{2}(\log W-\frac{h-1}{2}\log a) = \frac{1}{2}h\log W-\frac{1}{4}h^2\log a+\frac{1}{4}h\log a$. Since the number of nodes in such a tree is $n=3\cdot 2^h-2$, it follows that $h=\log (n+2)-\log 3$, and hence that $\log n-2\leq h\leq \log n$ for sufficiently large $n$. From this we see that the worst case label length is at least 
\[
\frac{1}{2}\log n\log W-\frac{1}{4}\log n (\log n - 1)\log a-\log W -\frac{1}{2}\log a.
\]
In the case where $a=1$, we retrieve the bound of $\frac{1}{2}\log n\log W-\log W$ obtained in~\cite{gavoillepelegperennesraz}. It seems that larger values of $a$ only makes the above result weaker, but the the real strength of the above becomes apparent when we switch to the unweighted version of $(h,W,a)$-trees, in which we replace weighted edges by paths of similar lenghts. Note that a distance labeling scheme for the family of unweighted $(h,W,a)$-trees can be used as a distance labeling scheme for the weighted $(h,W,a)$-trees, and hence any lower bound in the weighted version automatcially becomes a lower bound in the unweighted version.

The number of nodes $n$ in an \emph{unweighted} $(h,W,a)$-tree is upper bounded by
\[
n\leq  2W+2\cdot 2W/a + 2^2\cdot 2W/a^2 + \dots + 2^{h-1}\cdot 2W/a^{h-1} +1
\]
In the case $a=2$, we get $n \leq 2Wh+1$.

\begin{theorem} \label{theo:distancelowerbintrees}
Any distance labeling scheme for unweighted $(h,W,2)$-trees, and hence also for general trees, has a worst-case label size of at least $\frac{1}{4}\log^2 n-O(\log n)$.
\end{theorem}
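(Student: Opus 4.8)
The plan is to cash in the weighted lower bound of \Cref{lemm:distancehM2} at the value $a=2$. By that lemma $g(h,W,2)\geq W^{h/2}/2^{h(h-1)/4}$, and since $N$ distinct binary strings all of length at most $L$ force $L\geq\log N-1$, any leaf distance labeling scheme for the family of weighted $(h,W,2)$-trees has worst-case label length at least $\tfrac h2\log W-\tfrac{h(h-1)}{4}-O(1)$. As observed just before the theorem, a distance labeling scheme for the unweighted $(h,W,2)$-trees transfers to the weighted ones by subdividing each weight-$\ell$ edge into a length-$\ell$ path, which preserves distances; and a distance labeling scheme for general trees restricts to one for the unweighted $(h,W,2)$-trees, since these are trees. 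Hence the displayed lower bound holds for general trees, and it only remains to express $\tfrac h2\log W-\tfrac{h(h-1)}{4}$ in terms of the number of nodes $n$ of the unweighted $(h,W,2)$-tree and to choose $h,W$ well.

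Using the bound $n\leq 2Wh+1$ recorded above the theorem, I would take $W$ close to $n/(2h)$, so that $\log W=\log n-\log h-O(1)$ and the quantity to be maximized is roughly $\tfrac h2(\log n-\log h)-\tfrac{h^2}{4}$. Differentiating in $h$ places the optimum near $h+\log h=\log n$, that is, at $h\approx\log n-\log\log n$. The choice $h=\log n$ itself is infeasible --- then $2^h$ already exceeds the node budget --- but for $h\approx\log n-\log\log n$ we have $2^h\ll n/(2h)$, so letting $W$ be the largest multiple of $2^h$ not exceeding $(n-1)/(2h)$ both meets the divisibility requirement $2^h\mid W$ and keeps $\log W=\log n-\log h-O(1)$. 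Substituting, $\tfrac h2\log W\approx\tfrac12(\log n-\log\log n)^2$ while $\tfrac{h^2}{4}\approx\tfrac14(\log n-\log\log n)^2$, so their difference is $\tfrac14(\log n-\log\log n)^2=\tfrac14\log^2 n-O(\log n\log\log n)$; squeezing the error term down to the claimed $O(\log n)$ would require carrying the exact unrounded constants through, and the lower-order bookkeeping is where I would expect the friction.

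The main obstacle is therefore not a new idea but the parameter balancing, and in particular recognizing that $a=2$ is the right value. For any constant $a>2$ the geometric sum $\sum_{i\geq 0}(2/a)^i$ is bounded, so an unweighted $(h,W,a)$-tree has only $\Theta(W)$ nodes and $\log W=\log n-O(1)$ with no $\log h$ loss --- but then the exponent penalty $\tfrac{h(h-1)}{4}\log a$ in \Cref{lemm:distancehM2} degrades the leading coefficient to $\tfrac1{4\log a}<\tfrac14$. Taking $a=2$ holds the coefficient at $\tfrac14$ at the price of the factor $h$ in the node count, which is precisely what produces the lower-order terms above. Verifying that no admissible $a$ beats $\tfrac14$, and that the constraints $W/a^i\in\NN$ can be satisfied for the chosen $h$, is the remaining routine work.
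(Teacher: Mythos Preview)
Your approach is essentially the paper's: invoke \Cref{lemm:distancehM2} at $a=2$, pick $h$ maximal with $2h\cdot 2^h+1\le n$ so that $h=\log n-\Theta(\log\log n)$, and read off the bound $\tfrac{h}{2}\log W-\tfrac{h(h-1)}{4}$. The paper short-circuits your optimization by setting $W=2^h$ exactly (the minimal admissible value), so $\log W=h$ and the expression collapses to $\tfrac14 h^2+\tfrac14 h$ with no further balancing; your suspicion about the lower-order term is well placed, since with $h=\log n-\Theta(\log\log n)$ one actually obtains $\tfrac14\log^2 n-O(\log n\log\log n)$, and the paper's displayed ``$-O(\log\log n)$'' in its last line is an overclaim relative to both its own argument and the stated theorem.
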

\begin{proof}
Choose the largest integer $h$ with $2\cdot 2^hh+1\leq n$, and note that we must have $h\geq \log n-O(\log\log n)$.  Set $W=2^h$ and consider the family of  $(h,W,2)$-trees, which is a subfamily of the family of trees with $n$ nodes. From \Cref{lemm:distancehM2} it therefore follows that the worst-case label length is
\begin{align*}
\frac{1}{2}h\log W-\frac{1}{4}h^2+\frac{1}{4}h &= \frac{1}{4}h^2 +\frac{1}{4}h 
 = \frac{1}{4}\log^2 n + \frac{1}{4}\log n -O(\log\log  n). \qedhere
\end{align*}
\end{proof}

% !TeX root = approxdist.tex

\section{Approximate distances in trees}
In this section we present a $(1+\eps)$-stretch distance labeling schemes with labels of size $\Oh(\log n)$.  
\begin{theorem}
   For constant $\eps>0$, $1+\eps$ stretch labeling scheme use $\Theta(\log n)$ bits.
\end{theorem}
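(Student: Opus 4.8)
\emph{Lower bound ($\Omega(\log n)$).} I would prove that even a single path forces labels of length $\log n-O(1)$. Take the path $v_0,\dots,v_{n-1}$ and suppose $t$ distinct path nodes $v_{i_1},\dots,v_{i_t}$ all receive the same label $\ell$. Then the decoder on the (identical) input $(\ell,\ell)$ outputs one fixed number $r$ that must $(1+\eps)$-approximate every distance $|i_a-i_b|$, $a\ne b$. Hence $r\le(1+\eps)m$ where $m$ is the smallest of these distances, and $r\ge M$ where $M=i_{\max}-i_{\min}$ is the largest; but among $t$ distinct integers spanning an interval of length $M$ two consecutive ones differ by at most $M/(t-1)$, so $m\le M/(t-1)$ and therefore $M\le(1+\eps)M/(t-1)$, i.e.\ $t\le 2+\eps=O(1)$. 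Thus every label is used only $O(1)$ times, so there are $\Omega(n)$ distinct labels and the label length is at least $\log n-O(1)$. (This argument needs neither unique labels nor handling of the $u=v$ query.)

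\emph{Upper bound ($O(\log n)$), setup.} By~\eqref{computdist}, $\dist(u,v)=\distroot(u)+\distroot(v)-2\distroot(w)$ with $w=\nca(u,v)$, and I would store $\distroot$ exactly in every label (this is the only part that must be exact). Since $\dist(u,v)\ge\dist(v,w)=\distroot(v)-\distroot(w)$, it suffices to recover $\dist(v,w)$ up to a factor $1+\eps$: plugging a value $D'$ with $\dist(v,w)\le D'\le(1+\eps)\dist(v,w)$ into the formula yields $\distroot(u)-\distroot(v)+2D'$, which is squeezed between $\dist(u,v)$ and $(1+2\eps)\dist(u,v)$ — a one‑sided $(1+2\eps)$‑stretch, so rescaling $\eps$ finishes. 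When $w\in\{u,v\}$ the answer is just a difference of $\distroot$s; otherwise, by heavy‑light, at least one of $u,v$ — say $v$, and \Cref{lemma:distfromlights} tells the decoder which one and the relevant index — reaches $w$ through a light child $w_{i+1}$ of $w$, and $\dist(v,w)=1+\dist(v,w_{i+1})$ with $\dist(v,w_{i+1})<|T_{w_{i+1}}|\le n/2^{i}$. So the whole task reduces to an \emph{approximate NCA‑depth} query: let the label of $v$ carry, for each light ancestor, an approximation of the distance from $v$ to it.

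\emph{Upper bound, the hard part.} The straightforward realisation stores, for the $i$‑th light ancestor, $\dist(v,w_{i+1})$ to within a factor $1+\eps$; since that value is $<n/2^{i}$ this costs $O(\log\log n+\log\tfrac1\eps)$ bits, and summing over the $\le\log n$ light ancestors gives only $O(\log n(\log\log n+\log\tfrac1\eps))$, i.e.\ $O(\log n\log\log n)$ for constant $\eps$ — not yet the claimed bound. The crux of the proof is therefore to remove this extra $\log\log n$ factor. The plan would be to exploit two slacks: first, labels need not be unique here (symmetric/near nodes may share labels), and second, going from $w$ down toward $v$ the traversed heavy‑path segments shrink geometrically, so all but an $\eps$‑fraction of $\dist(v,w)$ is already captured by the $O(\log\tfrac1\eps)$ heavy paths nearest to $w$; the remaining light‑ancestor distances then only need to be recorded at a coarseness that still respects the additive error budget $\eps\dist(u,v)$, so that the per‑level cost amortises to $O(1)$ and the total becomes $O(\log n)$. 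Making this accounting watertight — in particular ensuring that a short (non‑unique) code for the far light ancestors never causes the estimate of $\dist(v,w)$ to drift by more than an $\eps$ fraction even when $\dist(v,w)$ is small — is the step I expect to be the main obstacle.
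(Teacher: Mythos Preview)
Your lower bound and your reduction of the upper bound to ``store a $(1+\eps)$-approximation of the distance from $v$ to each of its light ancestors'' are both correct and match the paper's proof (your lower bound is in fact a bit more careful than the paper's, which simply invokes the $u=v$ query to force all labels to be distinct).

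The gap is exactly where you flag it: you do not actually show how to encode all the approximate light-ancestor distances in $O(\log n)$ bits, and the plan you sketch --- using that heavy-path segments shrink geometrically so that only $O(\log\tfrac1\eps)$ segments near $w$ matter, with the rest stored ``coarsely'' --- does not lead there. The difficulty is that at labeling time you do not know $w$; every light ancestor is a potential $w$, and for each one you need a genuine multiplicative $(1+\eps)$-approximation of $\dist(v,\cdot)$, not just an additive one tied to some later-determined budget. Storing these approximations independently is precisely what costs $\Theta(\log\log n)$ bits apiece, and your geometric observation only bounds the \emph{range} of each value, which is already what the naive calculation uses.

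The paper's trick is an amortised, \emph{interleaved} encoding rather than per-ancestor codes. Walk from $v$ toward the root and write a single bit string $S$: append a \s{0} each time you pass a light ancestor, and append a \s{1} each time the distance from $v$ has grown by a factor $1+\eps$ since the last \s{1}. Then the number of \s{1}s preceding the $i$th \s{0} is exactly $j=\lfloor\log_{1+\eps}\dist(v,w_i)\rfloor$, so $(1+\eps)^j$ is a $(1+\eps)$-approximation of $\dist(v,w_i)$. The total number of \s{0}s is at most $\log n$ (light ancestors), and the total number of \s{1}s is at most $\log_{1+\eps} n = O(\tfrac1\eps\log n)$ (the distance to the root is at most $n$), so $|S|=O(\log n)$ for constant~$\eps$. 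Combined with the NCA label of \Cref{lemma:distfromlights} --- which tells the decoder which side has the light child of $w$ and the index $i$ of that light ancestor --- this gives the $O(\log n)$ upper bound. The point you were missing is that the \s{1}-markers are \emph{shared} across all light ancestors: the total number of ``distance doublings'' along the whole root path is $O(\log n)$, regardless of how many light ancestors there are, so the per-ancestor cost amortises to $O(1)$ bits without any coarsening.
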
  
\begin{proof}
As in the case of exact distances, we will create labels of size $\Oh(\log n)$ bits that contain the extended NCA labels from \Cref{lemma:distfromlights} as well as $\distroot$. We will also be using the formula in~\eqref{computdist}. However we can not afford to store exact distance to each apex ancestor. Even storing an $2$-approximate distance to each apex ancestor would require $\log n \log \log n$ bits. Furthermore, given approximate distance to the apex nodes does not directly guarantee upper bound for the approximate distance, as we in equation~(\ref{computdist}) are using subtractions. We will in the following address these two problems.

Let $w=\nca(u,v)$ and assume $w \not \in \{u,v\}$, since otherwise we can compute the exact distance using only $\distroot$.
Suppose we know a $(1+\eps)$-approximation $\alpha$ of $\dist(w,v)$ for some $\eps \geq 0$. That is, 
\begin{equation}\label{bound-alpha}
	\dist(w,v) \leq \alpha \leq (1+\eps) \dist(w,v).
\end{equation} 
Define $\tilde{d}=\distroot(u) - \distroot(v) + 2\alpha$. First we show that $\tilde d$ is a $(1+2\eps)$-approximation of $\dist(u,v)$. Next we show how to represent all the $(1+\eps)$-approximate distances to light ancestors for a node using a total of $\Oh(\log n)$ bits.
Together with formula~\eqref{computdist}, these two facts prove that we can compute $(1+2\eps)$-stretch distances between any pair of nodes with labels of size $O(\log n)$. To prove the theorem, we can then simple replace $\eps$ by $\frac{1}{2}\eps$.

To see that  $\tilde d$ is a $(1+2\eps)$-approximation of $\dist(u,v)$, first note that 
\[
\tilde{d} = \distroot(u) - \distroot(v) + 2\alpha \geq \distroot(u) - \distroot(v) + 2\dist(w,v) = \dist(u,v).
\]
 For the other inequality, note that
\begin{eqnarray*}
	\tilde{d} &=& \distroot(u) - \distroot(v) + 2\alpha \\
	&\leq& \distroot(u) - \distroot(v) + 2(1+\epsilon)\dist(w,v) \\
	&=  & \distroot(u) - (\distroot(v) - \dist(w,v)) + (1+2\epsilon) \dist(w,v) \\
	&=& \distroot(u) - \distroot(w) + (1+2\epsilon)\dist(w,v)\\
	&=& \dist(u,w) + (1+2\epsilon)\dist(w,v) \\
	&\leq& (1+2\epsilon)\left( \dist(u,w) + \dist(w,v) \right) \\
	&=& (1+2\epsilon)\dist(u,v)\;.
\end{eqnarray*} 

It now only remains to show that we can compactly store all the approximate distances $\alpha$ to light ancestors using $O(\log n)$ bits space.

We use a heavy light path decomposition of the tree. For each node $v$ we can save a $2$ approximate distance to all its $k$ proper light ancestors node as follows. Let $S$ be a binary string initially with $k$ zeros. Before each $0$ we now inserts $1$s such that, if we have $j$ $1$s in total from the beginning of $S$ and to the $i$'th $0$, then the distance to the $i$th light ancestor $a$ of $v$ satisfies that $2^{j-1}\leq\dist(v,a)\leq 2^j$. This is the same as traversing the tree bottom-up from $v$ and, for each light node encountered on the way, adding a $0$ and each time the distance doubles adding a $1$. The number of $0$s equal the number of light nodes which is at most $\log n$, and the number of $1$s is also limited by $\log n$ since $n$ is the maximum distance in the tree. In total the length of $S$ is at most $2\log n$. 

Using the $O(\log n)$ bits label from Lemma~\ref{lemma:distfromlights} we can tell if one node is an ancestor of another, and if not which one has a light ancestor $a$ that is a child of their nearest common ancestor $w$. In addition, we can determine the total number $i$ of light ancestors up to $a$. This means that we can compute $j$, and hence the 2-approximation $j-1$, as the number of $1$'s in $S$ until the $i$'th $0$. 

We have now obtained a 2-approximation with labels of size $O(\log n)$. We can improve this to a $(1+\eps)$-approximation by setting a $1$ in $S$ each time the distance increases with $1+\eps$ rather than $2$. This will increase the label size with a constant factor $\frac{1}{\log (1+\eps)}$.

This proves that there is a $(1+\eps)$-stretch distance labeling scheme with $O(\log n)$. To complete the proof of the theorem, we note that, given any $(1+\eps)$-stretch distance scheme, we can always distinguish nodes (since identical nodes have distance 0), which means that we always need at least $n$ different labels, and hence labels of size at least $\log n$ bits.
\end{proof}

\newpage

\bibliographystyle{plain}
\bibliography{labels}

\makeatletter
\def\runninghead{\hrulefill\quad APPENDIX\quad\hrulefill}
\def\ps@headings{
\def\@oddhead{\footnotesize\rm\hfill\runninghead\hfill}}
\def\@evenhead{\@oddhead}
\def\@oddfoot{\rm\hfill\thepage\hfill}\def\@evenfoot{\@oddfoot}
\makeatother

\newpage
\setlength{\headsep}{15pt} \pagestyle{headings}

\appendix

\end{document}